\documentclass[11pt]{article}
\usepackage[left=1in,top=1in,right=1in,bottom=1in,head=.1in,nofoot]{geometry}

\setlength{\footskip}{24pt} % Page number/footer spacing
\usepackage{setspace,url,bm,amsmath,amssymb} % For double-spacing, URL font, math symbols

\usepackage{titlesec} % Section header formatting
\titlelabel{\thetitle.\quad} % Section header formatting
\titleformat*{\section}{\bf\large\center\uppercase} % Section header formatting

\usepackage{graphicx} % Graphics scaling
\usepackage{latexsym}
\usepackage{authblk}
\usepackage{rotating}

\usepackage{amsthm}
\theoremstyle{definition}
\newtheorem{theorem}{Theorem}
\newtheorem{lemma}{Lemma}
\newtheorem{example}{Example}

\usepackage{natbib} % ASA citation style
\bibpunct{(}{)}{;}{a}{}{,} % ASA citation style

 % Capitalize bibliography section header
\usepackage{etoolbox} % Bibliography underfull/overfull box fix
\apptocmd{\sloppy}{\hbadness 10000\relax}{}{} % Bibliography underfull/overfull box fix

\begin{document}
\doublespacing
\title{\bf Sharpening randomization-based causal inference for $2^2$ factorial designs with binary outcomes\footnote{Forthcoming in \emph{Statistical Methods in Medical Research}.}}
\date{\today}
\author[1]{Jiannan Lu\thanks{Address for correspondence: Jiannan Lu, One Microsoft Way, Redmond, Washington 98052-6399, U.S.A. Email: \texttt{jiannl@microsoft.com}}}
\affil[1]{Analysis and Experimentation, Microsoft Corporation}
\maketitle
\begin{abstract}

In medical research, a scenario often entertained is randomized controlled $2^2$ factorial design with a binary outcome. By utilizing the concept of potential outcomes, \cite{Dasgupta:2015} proposed a randomization-based causal inference framework, allowing flexible and simultaneous estimations and inferences of the factorial effects. However, a fundamental challenge that \cite{Dasgupta:2015}'s proposed methodology faces is that the sampling variance of the randomization-based factorial effect estimator is unidentifiable, rendering the corresponding classic ``Neymanian'' variance estimator suffering from over-estimation. To address this issue, for randomized controlled $2^2$ factorial designs with binary outcomes, we derive the sharp lower bound of the sampling variance of the factorial effect estimator, which leads to a new variance estimator that sharpens the finite-population Neymanian causal inference. We demonstrate the advantages of the new variance estimator through a series of simulation studies, and apply our newly proposed methodology to two real-life datasets from randomized clinical trials, where we gain new insights.

\end{abstract}
\textbf{Keywords:} factorial effect; finite-population analysis; inclusion-exclusion principle; partial identification; potential outcome.

\section{Introduction}

Since originally introduced to conduct and analyze agricultural experiments \citep{Fisher:1935, Yates:1937}, factorial designs have been widely applied in social, behavioral and biomedical sciences, because of their capabilities to evaluate multiple treatment factors simultaneously. In particular, over the past half-century, randomized controlled $2^2$ factorial designs have become more well-adopted in medical research, in which the research interest often lies in assessing the (main and interactive) causal effects of two distinct binary treatment factors on a binary outcome. Among the lengthy list of medical studies that are powered by $2^2$ factorial designs \citep{Chalmers:1955, Hennekens:1985, Eisenhauer:1994, Rapola:1997, Franke:2000, Ayles:2008, Mhurchu:2010, Greimel:2011, Manson:2012, James:2013}, one of the most impactful examples is the landmark Physicians' Health Study \citep{Stampfer:1985}, in which over ten thousand patients were randomly assign to four experimental arms -- 1. placebo aspirin and placebo $\beta-$carotene; 2. placebo aspirin and active $\beta-$carotene; 3. active aspirin and placebo $\beta-$carotene; 4. active aspirin and active $\beta-$carotene. This study contained multiple important end-points that were binary, e.g., cardiovascular mortality. 

For traditional treatment-control studies (i.e., $2^1$ factorial designs), a well-developed and popular methodology to conduct causal inference is the potential outcomes framework \citep{Neyman:1923, Rubin:1974}, where we define causal effects as comparisons (difference, ratio, et al.) between the treated and control potential outcomes, which are assumed to be fixed for each experimental unit. Consequently, estimation and inference of causal effects solely depend on treatment assignment randomization, which is often regarded as the gold standard for causal inference \citep{Rubin:2008}. As a randomization-based methodology, the potential outcomes framework possesses several advantages against other existing approaches, many of which are model-based. For example, it is fully non-parametric and therefore more robust to model mis-specification, and better suited for finite population analyses, which under certain circumstances are more appropriate as pointed by several researchers \citep{Miller:2006}.

Realizing the salient feature of the potential outcomes framework, \cite{Dasgupta:2015} formally extended it to $2^K$ factorial designs, by defining the factorial effects as linear contrasts of potential outcomes under different treatment combinations, and proposing the corresponding estimation and inferential procedures. \cite{Dasgupta:2015} argued that by utilizing the concept of potential outcomes, the proposed randomization-based framework ``results in better understanding
of'' factorial effects, and ``allows greater flexibility in statistical inference.'' However it is worth mentioning that, while ``inherited'' many desired properties of the potential outcomes framework, inevitably it also inherited a fundamental issue -- the sampling variance of the randomization-based estimator for the factorial effects is unidentifiable, and therefore the corresponding classic ``Neymanian'' variance estimator suffers from the issue of over-estimation in general (see Section 6.5 of \cite{Imbens:2015} for a detailed discussion) -- in fact, as pointed by \cite{Aronow:2014}, it is generally impossible to unbiasedly estimate the sampling variance, because we simply cannot directly infer the association between the potential outcomes. For treatment-control studies, this problem has been extensively investigated and somewhat well-resolved, for binary \citep{Robins:1988, Ding:2016} and more general outcomes \citep{Aronow:2014}. However, to our best knowledge, similar discussions appear to be absent in the existing literature for $2^K$ factorial designs, which are of both theoretical and practical interests. Motivated by several real-life examples in medical research, in this paper we take a first step towards filling this important gap, by sharpening randomized-based causal inference for $2^2$ factorial designs with binary outcomes. To be more specific, we derive the sharp (formally defined later) lower bound of the sampling variance of the factorial effect estimator, and propose the corresponding ``improved'' Neymanian variance estimator.

The paper proceeds as follows. In Section \ref{sec:review} we briefly review the randomization-based causal inference framework for $2^2$ factorial designs, focusing on binary outcomes. Section \ref{sec:improvement} presents the bias (i.e., magnitude of over-estimation) of the classic Neymanian variance estimator, derives the sharp lower bound of the bias, proposes the corresponding improved Neymanian variance estimator, and illustrate our results through several numerical and visual examples. Sections \ref{sec:simu} conducts a series of simulation studies to highlight the performance of the improved variance estimator. Section \ref{sec:application} applied our newly proposed methodology to two real-life medical studies, where new insights are gained. Section \ref{sec:conclusion} discusses future directions and concludes. We relegate the technical details to Appendices \ref{sec:proofs} and \ref{sec:simu-add}.

\section{Randomization-based causal inference for $2^2$ factorial designs with binary outcomes}\label{sec:review}

\subsection{$2^2$ factorial designs}

To review Neymanian causal inference for $2^2$ factorial designs, we adapt materials by \cite{Dasgupta:2015} and \cite{Lu:2016a}, and tailor them to the specific case with binary outcomes. In $2^2$ factorial designs, there are two treatment factors (each with two-levels coded as -1 and 1) and $4$ distinct treatment combinations $\bm z_j$ $(j=1, \ldots, 4).$ To define them, we rely on the model matrix \citep{Wu:2009} 
\begin{equation*}
\bm H =
\bordermatrix{& \bm h_0 & \bm h_1& \bm h_2 & \bm h_3\cr
              & 1  & -1 &  -1  & 1 \cr
              & 1 & -1 &  1  & -1 \cr
              & 1  & 1 &  -1  & -1 \cr
              & 1 & 1 & 1 & 1}.
\end{equation*}
The treatment combinations are $\bm z_1=(-1, -1),$ $\bm z_2=(-1, 1),$ $\bm z_3=(1, -1)$ and $\bm z_4=(1, 1),$ and later we will use $\bm h_1,$ $\bm h_2$ and $\bm h_3$ to define the factorial effects.

\subsection{Randomization-based inference}

By utilizing potential outcomes, \cite{Dasgupta:2015} proposed a framework for randomization-based causal inference for $2^K$ factorial designs. For our purpose, we consider a $2^2$ factorial design with $N \ge 8$ experimental units. Under the Stable Unit Treatment Value Assumption \citep{Rubin:1980}, for $i = 1, \ldots, N,$ we define $Y_i(\bm z_j)$ as the potential outcome of unit $i$ under treatment combination $\bm z_j,$ and let
$
\bm Y_i = \{ Y_i(\bm z_1), Y_i(\bm z_2), Y_i(\bm z_3), Y_i(\bm z_4) \}^\prime.
$
In this paper we only consider binary outcomes, i.e., $Y_i(\bm z_j) \in \{0, 1\}$ for all $i = 1, \ldots, N$ and $j = 1, \ldots, 4.$

To save space, we introduce two sets of notations. First, we let
\begin{equation*}
D_{k_1, k_2, k_3, k_4} = \sum_{i=1}^N \prod_{j=1}^4 1_{\{Y_i(\bm z_j) = k_j\}}
\quad
(k_1, k_2, k_3, k_4 \in \{0, 1\}), 
\end{equation*}
Consequently, instead of specifying the potential outcomes
$
(\bm Y_1, \ldots, \bm Y_N)
$
entry by entry, we can equivalently characterize them using the ``joint distribution'' vector
$
(D_{0000}, D_{0001}, \ldots, D_{1110}, D_{1111}),
$
where the indices are ordered binary representations of zero to fifteen. Second, for all non-empty sets
$
\{j_1, \ldots, j_s\} \subset \{1, \ldots, 4\},
$
we let
\begin{equation*}
N_{j_1, \ldots, j_s} 
= \sum_{i=1}^N 1_{\left\{Y_i(\bm z_{j_1}) = 1, \ldots, Y_i(\bm z_{j_s}) = 1\right\}}.
\end{equation*}
Therefore, for $j=1, \ldots, 4,$ the average potential outcome for $\bm z_j$ is
\begin{equation*}
p_j = N^{-1} \sum_{i = 1}^N Y_i (\bm z_j) = N_j / N,
\end{equation*}
and let
$
\bm p = ( p_1, p_2, p_3, p_4 )^\prime.
$
Define the $l$th (individual and population) factorial effects as
\begin{equation}
\label{eq:factorial-effects}
\tau_{il} = 2^{-1} \bm h_l^\prime \bm Y_i
\quad
(i = 1, \ldots, N);
\quad
\bar \tau_l = 2^{-1} \bm h_l^\prime \bm p,
\end{equation}
for $l=1, 2, 3,$ which correspond to the main effects of the first and second treatment factors, and their interaction effect, respectively.

Having defined the treatment combinations, potential outcomes and factorial effects, next we discuss the treatment assignment and observed data. Suppose for $j=1, \ldots, 4,$ we randomly assign $n_j$ (a pre-specified constant) units to treatment combination $\bm z_j.$ Let
\begin{equation*}
W_i(\bm z_j)
= 
\begin{cases}
1, & \text{if unit $i$ is assigned to $\bm z_j,$ } \\
0, & \text{otherwise;} \\
\end{cases}
\quad
(i=1, \ldots, N; j=1, \ldots, 4)
\end{equation*}
be the treatment assignments, and
\begin{equation*}
Y_i^\textrm{obs} = \sum_{j=1}^4 W_i(\bm z_j) Y_i(\bm z_j)
\end{equation*}
be the observed outcome for unit $i,$ and
\begin{equation*}
n_j^\textrm{obs} 
= \sum_{i=1}^N W_i(\bm z_j)Y_i(\bm z_j)
= \sum_{i: W_i(\bm z_j) = 1} Y_i^\textrm{obs}.
\end{equation*}
Therefore, the average observed potential outcome for $\bm z_j$ is 
$
\hat p_j = n_j^\mathrm{obs} / n_j,
$
for all
$
j = 1, \ldots, 4.
$
Denote
$
\hat{\bm p}
= 
(
\hat p_1, 
\hat p_2,
\hat p_3,
\hat p_4
)^\prime,
$
and the randomization-based estimators for 
$
\bar \tau_l
$
is
\begin{equation}
\label{eq:factorial-effects-estimator}
\hat {\bar \tau}_l =  2^{-1} \bm h_l^\prime \hat{\bm p}
\quad
(l = 1, 2, 3),
\end{equation}
which are unbiased with respect to the randomization distribution. 

Motivated by several relevant discussions in the existing literature \citep{Freedman:2008, Lin:2013, Dasgupta:2015, Ding:2016, Ding:2017}, \cite{Lu:2016a, Lu:2016b} proved the consistency and asymptotic Normality of the randomization-based estimator in \eqref{eq:factorial-effects-estimator}, and derived its sampling variance as
\begin{equation}
\label{eq:factorial-effects-variance}
\mathrm{Var}(\hat {\bar \tau}_l) = \frac{1}{4} \sum_{j=1}^4  S_j^2 / n_j - \frac{1}{N} S^2(\bar \tau_l),
\end{equation}
where for $j=1, \ldots, 4$
\begin{equation*}
S_j^2 
= 
(N-1)^{-1}
\sum_{i=1}^N 
\{ 
Y_i(\bm z_j) - p_j
\}^2 
= \frac{N}{N-1} p_j (1 - p_j)
\end{equation*}
is the variance of potential outcomes for $\bm z_j,$ and
\begin{equation*}
S^2(\bar \tau_l) 
= 
(N-1)^{-1}
\sum_{i=1}^N 
(\tau_{il} - \bar \tau_l)^2
\end{equation*}
is the variance of the $l$th (individual) factorial effects in \eqref{eq:factorial-effects}.

\section{Improving the Neymanian variance estimator} \label{sec:improvement}

\subsection{Background}

Given the sampling variance in \eqref{eq:factorial-effects-variance}, we estimate it by substituting $S_j^2$ with its unbiased estimate
\begin{equation*}
s_j^2 
= 
(n_j - 1)^{-1} 
\sum_{i: W_i(\bm z_j) = 1}
\{
Y_i^{\textrm{obs}} - \hat p_j 
\}^2
= \frac{n_j}{n_j - 1} \hat p_j (1 - \hat p_j),
\end{equation*}
and substituting 
$
S^2(\bar \tau_l) 
$
with its lower bound 0 (due to the fact that it is not identifiable, because none of the individual factorial effects $\tau_{il}$'s are observable). Consequently, we obtain the ``classic Neymanian'' variance estimator is
\begin{equation}
\label{eq:factorial-effects-variance-estimator}
\widehat{\mathrm{Var}}_{\mathrm{N}}(\hat {\bar \tau}_l) 
= \frac{1}{4} \sum_{j=1}^4 s_j^2 / n_j
= \frac{1}{4} \sum_{j=1}^4 \frac{\hat p_j (1 - \hat p_j)}{n_j - 1}.
\end{equation}
This estimator over-estimates the true sampling variance on average by
\begin{equation}
\label{eq:factorialeffect-variance-estimator-over}
\mathrm{E} 
\left\{ 
\widehat{\mathrm{Var}}_{\mathrm{N}}(\hat {\bar \tau}_l) 
\right\} 
- \mathrm{Var}(\hat{\bar \tau}_l) 
= S^2(\bar \tau_l) / N,
\end{equation}
unless strict additivity \citep{Dasgupta:2015} holds, i.e., 
\begin{equation*}
\tau_{il} = \tau_{i^\prime l}
\quad
(\forall i,i^\prime = 1, \ldots, N),
\end{equation*}
which is unlikely to happen in real-life scenarios, especially for binary outcomes \citep{LaVange:2005, Rigdon:2015}. We summarize and illustrate the above results by the following example.

\begin{example} 
Consider a hypothetical $2^2$ factorial design with $N=16$ units, whose potential outcomes, factorial effects and summary statistics are shown in Table \ref{tab:potential-outcomes}, from which we draw several conclusions -- first, the population-level factorial effects in \eqref{eq:factorial-effects} are -0.1563, -0.0313 and -0.0313, respectively; second, the sampling variances of the randomization-based estimators in \eqref{eq:factorial-effects-estimator} are 0.0425, 0.0493 and 0.0493, respectively; third, if we employ the classic Neymanian variance estimator in \eqref{eq:factorial-effects-variance-estimator}, on average we will over-estimate the sampling variances by 52.5\%, 31.6\% and 31.6\%, respectively.

\begin{table}[htbp]
\centering
\caption{\small Hypothetical potential outcomes, factorial effects and summary statistics for a $2^2$ factorial design with 16 experimental units.}
\vspace{2mm}
\small
\begin{tabular}{c|cccc|ccc}
\hline
\multicolumn{1}{c}{Unit}   & \multicolumn{4}{|c|}{Potential outcomes} & \multicolumn{3}{c}{Factorial Effects} \\
 ($i$) &      $Y_i(\bm z_1)$   &    $Y_i(\bm z_2)$     &      $Y_i(\bm z_3)$     &   $Y_i(\bm z_4)$          &  $\tau_{i1}$   &  $\tau_{i2}$  & $\tau_{i3}$  \\ 
 \hline
  1 &  1 &  1 &  1 &  0 & -0.5 & -0.5 & -0.5 \\ 
  2 &  0 &  0 &  1 &  1 & 1.0 & 0.0 & 0.0 \\ 
  3 &  1 &  1 &  0 &  0 & -1.0 & 0.0 & 0.0 \\ 
  4 &  1 &  0 &  1 &  0 & 0.0 & -1.0 & 0.0 \\ 
  5 &  0 &  1 &  0 &  0 & -0.5 & 0.5 & -0.5 \\ 
  6 &  1 &  0 &  0 &  1 & 0.0 & 0.0 & 1.0 \\ 
  7 &  0 &  1 &  0 &  0 & -0.5 & 0.5 & -0.5 \\ 
  8 &  1 &  1 &  0 &  1 & -0.5 & 0.5 & 0.5 \\ 
  9 &  0 &  1 &  1 &  0 & 0.0 & 0.0 & -1.0 \\ 
  10 &  0 &  0 &  1 &  1 & 1.0 & 0.0 & 0.0 \\ 
  11 &  1 &  1 &  0 &  0 & -1.0 & 0.0 & 0.0 \\ 
  12 &  1 &  0 &  0 &  0 & -0.5 & -0.5 & 0.5 \\ 
  13 &  0 &  1 &  0 &  1 & 0.0 & 1.0 & 0.0 \\ 
  14 &  0 &  0 &  0 &  0 & 0.0 & 0.0 & 0.0 \\ 
  15 &  1 &  1 &  1 &  0 & -0.5 & -0.5 & -0.5 \\ 
  16 &  1 &  0 &  1 &  1 & 0.5 & -0.5 & 0.5 \\ 
\hline
Mean  & $p_1$ & $ p_2$ & $p_3$ & $p_4$ & $\bar \tau_1$ & $\bar \tau_2$ & $\bar \tau_3$ \\
 & = 0.5625 & = 0.5625 & = 0.4375 & = 0.3750 & = -0.1563 & = -0.0313 & = -0.0313          \\
 \hline
Variance  & $S_1^2$ & $S_2^2$ & $S_3^2$ & $S_4^2$ & $S^2(\bar \tau_1)$ & $S^2(\bar \tau_2)$ & $S^2(\bar \tau_3)$ \\
 & = 0.2625 & = 0.2625 & = 0.2625 & = 0.2500 & = 0.3573 & = 0.2490 & = 0.2490       \\ 
\hline
\end{tabular}
\label{tab:potential-outcomes}
\end{table}
\end{example}

\subsection{Sharp lower bound of the sampling variance}

As demonstrated in previous sections, the key to improve the classic Neymanian variance estimator \eqref{eq:factorial-effects-variance-estimator} is obtaining a non-zero and identifiable lower bound of
$
S^2(\bar \tau_l).
$
To achieve this goal, we adopt the partial identification philosophy, commonly used in the existing literature to bound either the randomization-based sampling variances of causal parameters \citep{Aronow:2014}, or the causal parameters themselves \citep{Zhang:2003, Fan:2010, Lu:2015}.

We first present two lemmas, which play central roles in the proof of our main theorem. 

\begin{lemma}
\label{lemma:variance-taul-formula}
Let 
$
\bm h_l = (h_{1l}, h_{2l}, h_{3l}, h_{4l})^\prime
$
for all $l=1, 2, 3.$ Then
\begin{equation*}
S^2(\bar \tau_l) 
= 
\frac{1}{4(N-1)} 
\left(
\sum_{j=1}^4 N_j
+
\sum_{j \ne j^\prime} h_{lj}h_{lj^\prime} N_{jj^\prime}
\right)
-
\frac{N}{N-1} \bar \tau_l^2.
\end{equation*}
\end{lemma}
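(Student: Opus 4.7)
The plan is to reduce the identity to a straightforward expansion of $\tau_{il}^2$, exploiting two facts specific to this setting: the entries of $\bm h_l$ are $\pm 1$, and the potential outcomes are binary so $Y_i(\bm z_j)^2 = Y_i(\bm z_j)$. I will first rewrite the population variance in its computational form
\begin{equation*}
S^2(\bar\tau_l) \;=\; \frac{1}{N-1}\sum_{i=1}^N \tau_{il}^2 \;-\; \frac{N}{N-1}\bar\tau_l^2,
\end{equation*}
so that the work reduces to evaluating $\sum_{i=1}^N \tau_{il}^2$ in closed form in terms of the sufficient statistics $N_j$ and $N_{jj'}$.

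Next I would plug in the definition $\tau_{il} = 2^{-1}\bm h_l^\prime \bm Y_i$ and expand the square as a double sum,
\begin{equation*}
\tau_{il}^2 \;=\; \frac{1}{4}\sum_{j=1}^4\sum_{j'=1}^4 h_{lj}h_{lj'}Y_i(\bm z_j)Y_i(\bm z_{j'}),
\end{equation*}
swap the order of summation, and separate the diagonal ($j=j'$) and off-diagonal ($j\neq j'$) contributions. On the diagonal, $h_{lj}^2=1$ and $Y_i(\bm z_j)^2=Y_i(\bm z_j)$ (this is where binarity enters), so $\sum_{i=1}^N Y_i(\bm z_j)^2 = N_j$. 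Off the diagonal, the definition of $N_{jj'}$ gives $\sum_{i=1}^N Y_i(\bm z_j)Y_i(\bm z_{j'}) = N_{jj'}$ directly, since the product of two binary indicators is the indicator of their joint event.

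Combining these two contributions yields
\begin{equation*}
\sum_{i=1}^N \tau_{il}^2 \;=\; \frac{1}{4}\left(\sum_{j=1}^4 N_j + \sum_{j\neq j'} h_{lj}h_{lj'}N_{jj'}\right),
\end{equation*}
and substituting this back into the computational form of $S^2(\bar\tau_l)$ produces exactly the claimed expression. There is no real obstacle here; the only thing to be careful about is making it explicit that both simplifications (the collapse $Y_i^2 = Y_i$ and the collapse $h_{lj}^2 = 1$) are needed to get the clean form $\sum_j N_j$ for the diagonal piece, rather than a more cumbersome sum over squared terms. Because the lemma's role is to express the unidentifiable $S^2(\bar\tau_l)$ as a linear function of the pairwise joint counts $N_{jj'}$, this representation is precisely what later enables partially identifying $S^2(\bar\tau_l)$ by bounding the unobservable $N_{jj'}$'s via Fréchet–Hoeffding type constraints.
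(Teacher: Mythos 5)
Your proposal is correct and follows essentially the same route as the paper: expand $\sum_{i=1}^N \tau_{il}^2 = 2^{-2}\sum_{i=1}^N(\bm h_l^\prime \bm Y_i)^2$ into diagonal and off-diagonal terms, use $h_{lj}^2=1$ and $Y_i(\bm z_j)^2=Y_i(\bm z_j)$ to obtain $\sum_j N_j$ and $\sum_{j\neq j'}h_{lj}h_{lj'}N_{jj'}$, and substitute into the computational form of $S^2(\bar\tau_l)$. No gaps.
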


\begin{lemma}
\label{lemma:ie-inequality}
For all $l = 1, 2, 3,$ define
\begin{equation*}
\bm J_{l-} = \{j: h_{lj} = -1\},
\quad
\bm J_{l+} = \{j: h_{lj} = 1\}.
\end{equation*}
Then
\begin{equation}
\label{eq:ie-inequality}
\sum_{j=1}^4 N_j
+
\sum_{j \ne j^\prime} h_{lj}h_{lj^\prime} N_{jj^\prime}
\ge
\left|
\sum_{j=1}^4 h_{lj} N_l
\right|,
\end{equation}
and the equality in \eqref{eq:ie-inequality} holds if and only if
\begin{equation}
\label{eq:ie-inequality-hold-1}
\sum_{j \in \bm J_{l+}} Y_i(\bm z_j) - 1
\le
\sum_{j \in \bm J_{l-}} Y_i(\bm z_j) 
\le 
\sum_{j \in \bm J_{l+}} Y_i(\bm z_j)
\quad
(i = 1, \ldots, N)
\end{equation}
or
\begin{equation}
\label{eq:ie-inequality-hold-2}
\sum_{j \in \bm J_{l+}} Y_i(\bm z_j)
\le
\sum_{j \in \bm J_{l-}} Y_i(\bm z_j) 
\le 
\sum_{j \in \bm J_{l+}} Y_i(\bm z_j) + 1
\quad
(i = 1, \ldots, N).
\end{equation}
\end{lemma}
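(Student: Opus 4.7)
The plan is to reduce both sides of \eqref{eq:ie-inequality} to simple integer-valued quantities attached to each unit, then apply pointwise elementary inequalities. For each $i$, I would write $A_i = \sum_{j \in \bm J_{l+}} Y_i(\bm z_j)$ and $B_i = \sum_{j \in \bm J_{l-}} Y_i(\bm z_j)$. Since each column of $\bm H$ (other than $\bm h_0$) contains exactly two $+1$'s and two $-1$'s, we have $A_i, B_i \in \{0, 1, 2\}$, so $X_i := A_i - B_i$ is an integer satisfying $|X_i| \le 2$.

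The right-hand side is immediate: $\sum_{j=1}^4 h_{lj} N_j = \sum_{i=1}^N (A_i - B_i) = \sum_i X_i$. For the left-hand side, I would start from the algebraic identity
\begin{equation*}
\left( \sum_{j=1}^4 h_{lj} Y_i(\bm z_j) \right)^2
= \sum_{j=1}^4 h_{lj}^2 Y_i(\bm z_j)^2 + \sum_{j \ne j'} h_{lj} h_{lj'} Y_i(\bm z_j) Y_i(\bm z_{j'}).
\end{equation*}
Because $h_{lj}^2 = 1$ and $Y_i(\bm z_j)^2 = Y_i(\bm z_j)$ for binary outcomes, summing over $i$ and invoking the definitions of $N_j$ and $N_{jj'}$ converts the LHS of \eqref{eq:ie-inequality} into $\sum_i X_i^2$.

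Thus \eqref{eq:ie-inequality} reduces to $\sum_i X_i^2 \ge |\sum_i X_i|$, which follows from the fact that any integer $k$ satisfies $k^2 \ge |k|$ combined with the triangle inequality:
\begin{equation*}
\sum_i X_i^2 \;\ge\; \sum_i |X_i| \;\ge\; \Bigl| \sum_i X_i \Bigr|.
\end{equation*}
Equality in the first step forces $X_i \in \{-1, 0, 1\}$ for every $i$; equality in the second forces all nonzero $X_i$ to share a common sign. The only two joint possibilities are $X_i \in \{0, 1\}$ for all $i$, which rewrites as $A_i - 1 \le B_i \le A_i$ and is exactly \eqref{eq:ie-inequality-hold-1}, or $X_i \in \{-1, 0\}$ for all $i$, which rewrites as $A_i \le B_i \le A_i + 1$ and is exactly \eqref{eq:ie-inequality-hold-2}.

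I do not anticipate a serious obstacle: after the algebraic identity is in place, the problem collapses to the scalar inequality $k^2 \ge |k|$ on integers. The only step requiring care is tracking the two equality conditions in tandem and translating $X_i \in \{0,1\}$ versus $X_i \in \{-1,0\}$ back into the sandwich inequalities on $\sum_{j \in \bm J_{l-}} Y_i(\bm z_j)$ stated in the lemma.
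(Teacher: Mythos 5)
Your proof is correct, and it takes a genuinely different route from the paper's. The paper argues only for $l=1$ (asserting the other cases are analogous), splits the absolute value into the two one-sided inequalities $\mathrm{LHS} \ge \pm\sum_{j} h_{1j}N_j$, and establishes each by summing four inclusion--exclusion (Bonferroni-type) set inequalities such as $N_{1234}\le N_{12}$ and $N_{13}+N_{14}-N_{134}\le N_1$, then intersects the four resulting equality conditions to recover the monotonicity statements. You instead observe that the left-hand side equals $\sum_{i=1}^N X_i^2$ with $X_i = \bm h_l^\prime \bm Y_i = A_i - B_i$ an integer of absolute value at most $2$ --- the same per-unit quadratic expansion that already drives the proof of Lemma \ref{lemma:variance-taul-formula} --- so the claim collapses to the chain $\sum_i X_i^2 \ge \sum_i |X_i| \ge \left|\sum_i X_i\right|$, and the two ways equality can propagate through that chain ($X_i\in\{0,1\}$ for all $i$, versus $X_i\in\{-1,0\}$ for all $i$) translate directly into \eqref{eq:ie-inequality-hold-1} and \eqref{eq:ie-inequality-hold-2}. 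Your version buys uniformity (one argument covering all $l$ and both signs simultaneously, extending verbatim to $2^K$ designs) and a cleaner, less error-prone derivation of the equality conditions, which the paper leaves partly ``to the readers''; the paper's version buys a direct combinatorial reading, unit by unit, of exactly which configurations of potential outcomes destroy sharpness. One cosmetic point: the right-hand side of \eqref{eq:ie-inequality} is printed as $\sum_{j=1}^4 h_{lj}N_l$, which you have correctly read as a typo for $\sum_{j=1}^4 h_{lj}N_j$.
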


We provide the proofs of Lemmas \ref{lemma:variance-taul-formula} and \ref{lemma:ie-inequality} in Appendix \ref{sec:proofs}. With the help of the lemmas, we present an identifiable sharp lower bound of
$
S^2(\bar \tau_l).
$

\begin{theorem}
\label{thm:variance-taul-lower-bound}
The sharp lower bound for 
$
S^2(\bar \tau_l)
$
is 
\begin{equation}
\label{eq:variance-taul-lower-bound}
S_{\mathrm{LB}}^2(\bar \tau_l)
=
\frac{N}{N-1}
\max 
\left\{
|\bar \tau_l| ( 1/2 - |\bar \tau_l|), 0
\right\}.
\end{equation}
The equality in \eqref{eq:variance-taul-lower-bound} holds if and only if \eqref{eq:ie-inequality-hold-1} or \eqref{eq:ie-inequality-hold-2} holds.
\end{theorem}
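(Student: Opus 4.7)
The plan is to derive the bound by combining the two lemmas stated just before the theorem. The key preparatory observation is that, since $p_j = N_j/N$ and $\bar\tau_l = \frac{1}{2}\bm h_l^\prime \bm p$ by \eqref{eq:factorial-effects},
\[
\sum_{j=1}^4 h_{lj} N_j \;=\; N\sum_{j=1}^4 h_{lj} p_j \;=\; 2N\bar\tau_l,
\]
so the right-hand side of the inequality \eqref{eq:ie-inequality} in Lemma \ref{lemma:ie-inequality} reads $2N|\bar\tau_l|$, an identifiable quantity depending only on the marginal potential-outcome means.

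Substituting this lower bound into the identity of Lemma \ref{lemma:variance-taul-formula} gives
\[
S^2(\bar\tau_l) \;\ge\; \frac{2N|\bar\tau_l|}{4(N-1)} - \frac{N}{N-1}\bar\tau_l^2 \;=\; \frac{N}{N-1}\,|\bar\tau_l|\bigl(\tfrac{1}{2} - |\bar\tau_l|\bigr).
\]
Coupled with the trivial bound $S^2(\bar\tau_l)\ge 0$, this recovers \eqref{eq:variance-taul-lower-bound} after taking the max. For the equality characterization, equality in the chain above holds exactly when the inequality from Lemma \ref{lemma:ie-inequality} is tight, which, by that lemma, is equivalent to \eqref{eq:ie-inequality-hold-1} or \eqref{eq:ie-inequality-hold-2}; this immediately yields the ``if and only if'' statement of the theorem in the non-degenerate regime $|\bar\tau_l|\le 1/2$.

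The step I expect to require the most care is not the numerical derivation, which is essentially mechanical once the two lemmas are in hand, but rather the \emph{sharpness} of the bound: one should confirm that for every feasible value of $\bar\tau_l$ there actually exist unit-level configurations satisfying \eqref{eq:ie-inequality-hold-1} or \eqref{eq:ie-inequality-hold-2} whose marginals reproduce the given $\bm p$. This amounts to exhibiting joint distributions on $\{Y_i(\bm z_1),\ldots,Y_i(\bm z_4)\}$ in which, for each unit $i$, the sums $\sum_{j\in \bm J_{l+}} Y_i(\bm z_j)$ and $\sum_{j\in \bm J_{l-}} Y_i(\bm z_j)$ differ by at most one -- a ``near-balance'' condition that is flexible enough to realize any target $\bar\tau_l$ in $[-1/2,1/2]$, while in the corner regime $|\bar\tau_l|>1/2$ the $\max$ clause reduces to the trivial bound $0$ and equality requires strict additivity, which for binary outcomes is itself a special case of \eqref{eq:ie-inequality-hold-1}--\eqref{eq:ie-inequality-hold-2}.
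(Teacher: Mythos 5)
Your proposal is correct and follows exactly the route the paper intends: the paper's own proof of Theorem \ref{thm:variance-taul-lower-bound} is a single line stating that it ``directly follows'' from \eqref{eq:factorial-effects} and Lemmas \ref{lemma:variance-taul-formula} and \ref{lemma:ie-inequality}, and your substitution of $\sum_{j=1}^4 h_{lj}N_j = 2N\bar\tau_l$ into the two lemmas is precisely the omitted computation. Your closing remarks on sharpness and the $|\bar\tau_l|>1/2$ corner go beyond what the paper writes down (and your claim that strict additivity there is a special case of the monotonicity conditions is not quite right, since $\tau_{il}=\pm 1$ for all $i$ violates both \eqref{eq:ie-inequality-hold-1} and \eqref{eq:ie-inequality-hold-2}), but this does not affect the main derivation.
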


\medskip
By employing the inclusion-exclusion principle and Bonferroni's inequality, we provide the proof of Theorem \ref{thm:variance-taul-lower-bound} in Appendix \ref{sec:proofs}. The lower bound in Theorem \ref{thm:variance-taul-lower-bound} is sharp in the sense that it is compatible with the marginal counts of the potential outcomes 
$
(N_1, N_2, N_3, N_4)
$
(and consequently $ \bar \tau_l$). To be more specific, for fixed values of 
$
(N_1, N_2, N_3, N_4),
$
there exists a hypothetical set of potential outcomes
$
(\bm Y^*_1, \ldots, \bm Y^*_N),
$
such that 
\begin{equation*}
\sum_{i=1}^N Y_i^*(\bm z_j) = N_j
\quad
(j=1, \ldots, 4);
\quad
S^{*2}(\bar \tau_l) = S_{\mathrm{LB}}^2(\bar \tau_l).
\end{equation*}

Theorem \ref{thm:variance-taul-lower-bound} effectively generalizes the discussions regarding binary outcomes by \cite{Robins:1988} and \cite{Ding:2016}, from treatment-control studies to $2^2$ factorial designs. In particular, the conditions in \eqref{eq:ie-inequality-hold-1} and \eqref{eq:ie-inequality-hold-2} echo the parallel results by \cite{Ding:2016}, and therefore we name them the ``generalized'' monotonicity conditions on the potential outcomes. However, intuitive and straightforward as it seems, proving Theorem \ref{thm:variance-taul-lower-bound} turns out to be a non-trivial task.

\subsection{The ``improved'' Neymanian variance estimator}

The sharp lower bound in \eqref{eq:variance-taul-lower-bound} leads to the ``improved'' Neymanian variance estimator
\begin{equation}
\label{eq:factorial-effects-variance-estimator-improved}
\widehat{\mathrm{Var}}_{\mathrm{IN}}(\hat {\bar \tau}_l) 
= \underbrace{
\frac{1}{4} \sum_{j=1}^4 \frac{\hat p_j (1 - \hat p_j)}{n_j - 1}
}_{
\widehat{\mathrm{Var}}_{\mathrm{N}}(\hat {\bar \tau}_l) 
}
-
\underbrace{
\vphantom {
\frac{1}{4} \sum_{j=1}^4 \frac{\hat p_j (1 - \hat p_j)}{n_j - 1}
} 
\frac{1}{N-1}
\max 
\left\{
|\hat{\bar \tau}_l| ( 1/2 - |\hat{\bar \tau}_l|), 0
\right\}
}_{
\hat S_{\mathrm{LB}}^2(\bar \tau_l) / N
},
\end{equation}
which is guaranteed to be smaller than the classic Neymanian variance estimator in \eqref{eq:factorial-effects-variance-estimator} for any observed data, because the correction term on the right hand side of \eqref{eq:factorial-effects-variance-estimator-improved} is always non-negative. For example, for balanced designs (i.e., $n_1 = n_2 = n_3 = n_4$) with large sample sizes, the relative estimated variance reduction is
\begin{equation*}
\gamma_l = 
\frac{
\hat S_{\mathrm{LB}}^2(\bar \tau_l) / N
}
{
\widehat{\mathrm{Var}}_{\mathrm{N}}(\hat {\bar \tau}_l) 
}
\approx
\frac{|\hat{\bar \tau}_l| ( 1/2 - |\hat{\bar \tau}_l|)}{\sum_{j=1}^4 \hat p_j (1 - \hat p_j)}.
\end{equation*}
We illustrate the above results by the following numerical example.

\begin{example}
Consider a balanced $2^2$ factorial design with $N = 400$ experimental units, so that 
$
(n_1, n_2, n_3, n_4) = (100, 100, 100, 100).
$
For the purpose of visualizing the estimated variance reduction under various settings, we repeatedly draw  
\begin{equation*}
n_j^\mathrm{obs} \stackrel{iid.}{\sim} \lfloor \mathrm{Unif}(0, 100) \rfloor
\quad
(j=1, \ldots, 4)
\end{equation*} 
for 5000 times, and plot the corresponding $\gamma_1$'s in Figure \ref{fg:gamma}. We can draw several conclusions from the results. First, for 13\% of the times $\gamma_1$ is smaller than 1\%, corresponding to cases where $\hat{\bar \tau}_l \approx -0.5,$ $0$ or $0.5.$ Second, for 13\% of the times $\gamma_1$ is larger than 10\%. Third, the largest $\gamma_1$ is approximately 20.5\%, corresponding to the case where
$
(n_1^\mathrm{obs}, n_2^\mathrm{obs}, n_3^\mathrm{obs}, n_4^\mathrm{obs})
= (0, 0, 16, 14)
$ 
and
$
\hat {\bar \tau}_1 = 0.15.
$
\end{example}

\begin{figure}
\centering
\includegraphics[width = .7 \linewidth]{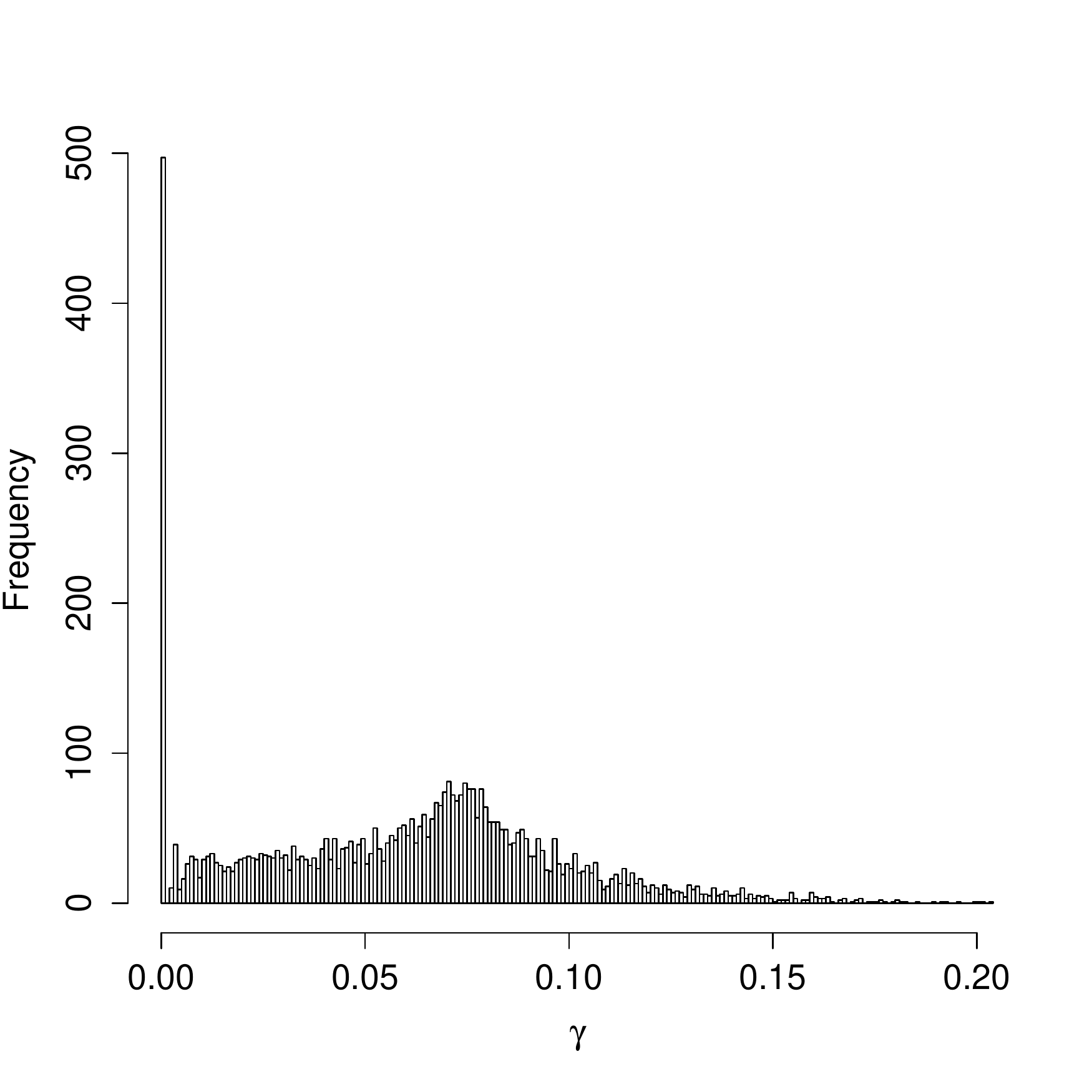}
\caption{Visualization of the relative variance estimation reduction for $\hat{\bar \tau}_1,$ through 5000 repeated samplings of $(n_1^\mathrm{obs}, n_2^\mathrm{obs}, n_3^\mathrm{obs}, n_4^\mathrm{obs})$ using the Uniform distribution.}
\label{fg:gamma}
\end{figure}

\medskip
As pointed out by several researchers \citep{Aronow:2014, Ding:2016}, the probabilistic consistency of the factorial effect estimator
$
\hat{\bar \tau}_l
$
guarantees that the improved Neymanian variance estimator still over-estimates the sampling variance on average, unless one of the generalized monotonicity conditions in \eqref{eq:ie-inequality-hold-1}--\eqref{eq:ie-inequality-hold-2} holds. Nevertheless, it does improve the classic Neymanian variance estimator in \eqref{eq:factorial-effects-variance-estimator}, and more importantly, this improvement is the ``best we can do'' without additional information. In the next section, we conduct simulation studies to demonstrate the finite-sample properties of, and to compare the performances of, the classic and improved Neymanian variance estimators.

\section{Simulation studies}
\label{sec:simu}

To save space, we focus on the first factorial effect
$
\bar \tau_1
$
and its randomization-based statistical inference. To mimic the empirical examples that we will re-analyze in the next section, we choose the sample size $N=800.$ Moreover, to (at least to some extent) explore the complex dependence structure of the potential outcomes, we adopt the latent multivariate Normal model for the underlying data generation mechanism is. To be more specific, let
\begin{equation*}
\bm \eta_i 
= 
\{
\eta_i(\bm z_1), \eta_i(\bm z_2), \eta_i(\bm z_3), \eta_i(\bm z_4) 
\}^\prime
\stackrel{iid.}{\sim} N \left( \bm \mu, \bm \Sigma \right)
\quad
(i = 1, \ldots, N),
\end{equation*}
and assume that for each $i$ 
\begin{equation*}
Y_i(\bm z_j) = 1_{\{\eta_i(\bm z_j) \ge 0\}}
\quad
(j = 1, \ldots, 4).
\end{equation*}
We consider the following six cases:
\begin{equation*}
\bm \mu = 
\underbrace{
\overbrace{(-2, -2, -2, -2)}^{\mathrm{Case} \: 1}, 
\; 
\overbrace{(0, 0, 0, 0)}^{\mathrm{Case} \: 2}
}_{\bar \tau_1 \approx 0}, 
\quad 
\underbrace{
\overbrace{(0, 0, 0, 2)}^{\mathrm{Case} \: 3}, 
\; 
\overbrace{(-2, -2, 0, -2)}^{\mathrm{Case} \: 4}
}_{\bar \tau_1 \approx 0.25}, 
\quad
\underbrace{
\overbrace{(-2, -2, -2, 2)}^{\mathrm{Case} \: 5}, 
\; 
\overbrace{(-2, 0, 0, 2)}^{\mathrm{Case} \: 6}
}_{\bar \tau_1 \approx 0.5}.
\end{equation*}
We choose the above values for $\bm \mu,$ so that the corresponding factorial effects $\bar \tau_1 \approx 0$ (the approximaition is due to finite-sample fluctuation) for Cases 1--2. Similarly, $\bar \tau_1 \approx 0.25$ for Cases 3 and 4, and $\bar \tau_1 \approx 0.5$ for Cases 5 and 6. Therefore, we can examine the scenarios where the sharp lower bound
$
S_{\mathrm{LB}}^2(\bar \tau_l)
$
in \eqref{eq:factorial-effects-variance-estimator-improved} are either small or large in magnitude. Moreover, we partially adopt the simulation settings by \cite{Dasgupta:2015} and let
\begin{equation*}
\bm \Sigma =
\renewcommand{\arraystretch}{.85}
\left( 
\begin{array}{cccc}
1 & \rho & \rho & \rho \\
\rho & 1 & \rho & \rho \\
\rho & \rho & 1 & \rho \\
\rho & \rho & \rho & 1
\end{array} 
\right)
\quad
(\rho = -1 /3, 0, 1/2),
\end{equation*}
which corresponds to negatively correlated, independent and positively correlated potential outcomes, respectively. The aforementioned data generation mechanism resulted eighteen ``joint distributions'' of the potential outcomes
$
(D_{0000}, D_{0001}, \ldots, D_{1111}),
$
which we report in the third column of Table \ref{tab:simu}. For each simulation case (i.e., row of Table \ref{tab:simu}), we adopt the following three-step procedure:
\begin{enumerate}
\item Use \eqref{eq:factorial-effects} and \eqref{eq:factorial-effects-variance} to calculate
$
\bar \tau_1,
$
the sampling variance of its randomization-based estimator and its lower bound, respectively, and report them in the fourth to sixth columns of Table \ref{tab:simu};

\item Independently draw $10000$ treatment assignments from a balanced $2^2$ factorial design with $(n_1, n_2, n_3, n_4) = (200, 200, 200, 200);$

\item For each observed data-set, use \eqref{eq:factorial-effects-estimator}, \eqref{eq:factorial-effects-variance-estimator} and \eqref{eq:factorial-effects-variance-estimator-improved} to calculate the point estimate of $\bar \tau_1,$ the classic and improved Neymanian variance estimates respectively, based on which we construct two 95\% confidence intervals. 
\end{enumerate}
To examine the performances of the classic and improved Neymanian variance estimators in \eqref{eq:factorial-effects-variance-estimator} and \eqref{eq:factorial-effects-variance-estimator-improved}, in the last six columns of Table \ref{tab:simu}, we report the relative (i.e., percentage wise) over-estimations of the true sampling variance, the average lengths and the coverage rates of their corresponding confidence intervals of the two estimators, respectively.

We can draw several conclusions from the results. First, because of the non-negative correction term
$
\hat S_{\mathrm{LB}}^2(\bar \tau_l) / N,
$
for all cases the improved Neymanian variance estimator \eqref{eq:factorial-effects-variance-estimator-improved} reduces the over-estimation of the sampling variance, shortens the confidence intervals and achieves better coverage rates without under-covering. For example, in Case 4 with $\rho = 1/2,$ the improved Neymanian variance estimator reduces the coverage rate from 0.974 to 0.956, achieving near nominal level. Second, by comparing Case 1 with Case 2 (or 3 with 4, 5 with 6), we can see that for a fixed $\bar \tau_l,$ although the absolute magnitude of the correction term is the same, the performance (i.e., reduction of percentage of over-estimation, average length and coverage rate) of the improved Neymanian variance estimator might differ significantly, depending on the ``marginal distributions'' of the potential outcomes (characterized by the mean parameter $\bm \mu$). Third, for a fixed marginal distribution, the performance of the improved Neymanian variance estimator might also differ significantly, depending on the dependence structure of the potential outcomes (characterized by the association parameter $\rho$). Fourth, in certain scenarios, while the improved Neymanian variance estimator only slightly shortens the confidence interval, it leads to a non-ignorable improvement on coverage rates. For example, in Case 5 with $\rho = 0,$ a less than 5\% shorter confidence interval reduces the coverage rate from 0.976 to 0.966.

To take into account alternative data generation mechanisms and thus provide a more comprehensive pircute, in Appendix \ref{sec:simu-add} we conduct an additional series of simulation studies, where we focus on several discrete outcome distributions. The results largely agree with the above conclusions.

\begin{sidewaystable}[htbp]
\caption{\footnotesize Simulation study results. The first three columns contain the case label, the value of the association parameter $\rho$ and the corresponding joint distribution of the potential outcomes. The next three columns contain the true values of the first factorial effect, the sampling variance of the individual factorial effects $\tau_{i1}$ $(i=1, \ldots, N),$ and its sharp lower bound. To examine the performances of the classic and improved Neymanian variance estimators in \eqref{eq:factorial-effects-variance-estimator} and \eqref{eq:factorial-effects-variance-estimator-improved}, the last six columns contain their percentages of over-estimation of the true sampling variance, and the average lengths and coverage rates of their corresponding confidence intervals. 
}
\vspace{2mm}
\centering
\scriptsize
\begin{tabular}{cccccccccccc}
  \hline
Case & $\rho$ & $(D_{0000}, D_{0001}, \ldots, D_{1110}, D_{1111})$ & $\bar \tau_1$ & $S^2(\bar \tau_1)$ & $S_\mathrm{LB}^2(\bar \tau_1)$ & $\mathrm{voe}_N$ & $\mathrm{voe}_I$ & $\mathrm{len}_N$ & $\mathrm{len}_I$ & $\mathrm{cover}_N$ & $\mathrm{cover}_I$ \\ 
  \hline
1 & -1/3 & (723, 14, 21, 1, 20, 0, 0, 0, 21, 0, 0, 0, 0, 0, 0, 0) & -0.003 & 0.025 & 0.001 & 34.7\% & 29.2\% & 0.022 & 0.021 & 0.977 & 0.969 \\ 

1 & 0 & (726, 18, 15, 1, 19, 0, 1, 0, 19, 0, 1, 0, 0, 0, 0, 0) & -0.002 & 0.023 & 0.001 & 31.3\% & 25.9\% & 0.022 & 0.021 & 0.975 & 0.966 \\ 

1 & 1/2 & (740, 10, 16, 0, 8, 1, 1, 1, 9, 4, 3, 0, 3, 0, 3, 1) & 0.001 & 0.018 & 0.000 & 21.7\% & 16.9\% & 0.022 & 0.022 & 0.972 & 0.961 \\ [1.5mm]

2 & -1/3 & (0, 29, 29, 93, 35, 79, 68, 28, 44, 96, 93, 36, 82, 42, 46, 0) & -0.014 & 0.309 & 0.007 & 44.9\% & 43.1\% & 0.071 & 0.070 & 0.979 & 0.977 \\ 

2 & 0 & (44, 61, 43, 52, 48, 46, 47, 59, 44, 46, 55, 51, 46, 56, 52, 50) & 0.016 & 0.252 & 0.008 & 33.7\% & 31.9\% & 0.071 & 0.070 & 0.979 & 0.972 \\

2 & 1/2 & (182, 41, 42, 23, 41, 27, 30, 34, 33, 22, 24, 39, 26, 40, 38, 158) & -0.001 & 0.158 & 0.001 & 18.7\% & 17.3\% & 0.071 & 0.070 & 0.965 & 0.958 \\ [1.5mm]

3 & -1/3 & (0, 34, 0, 117, 0, 143, 1, 116, 0, 110, 5, 113, 4, 117, 7, 33) & 0.228 & 0.220 & 0.062 & 40.0\% & 28.8\% & 0.062 & 0.060 & 0.975 & 0.970 \\ 

3 & 0 & (2, 118, 2, 91, 5, 95, 4, 77, 4, 97, 1, 112, 0, 100, 0, 92) & 0.239 & 0.188 & 0.062 & 32.1\% & 21.6\% & 0.062 & 0.060 & 0.976 & 0.970 \\ 

3 & 1/2 & (20, 177, 3, 66, 0, 68, 0, 75, 2, 61, 0, 60, 0, 62, 0, 206) & 0.239 & 0.144 & 0.062 & 22.6\% & 12.9\% & 0.062 & 0.060 & 0.970 & 0.964 \\ [1.5mm]

4 & -1/3 & (340, 19, 386, 6, 14, 0, 7, 0, 23, 0, 5, 0, 0, 0, 0, 0) & 0.237 & 0.089 & 0.062 & 35.5\% & 10.7\% & 0.041 & 0.037 & 0.978 & 0.960 \\ 

4 & 0 & (371, 6, 381, 11, 9, 1, 5, 0, 10, 0, 6, 0, 0, 0, 0, 0) & 0.244 & 0.081 & 0.062 & 35.4\% & 8.2\% & 0.039 & 0.035 & 0.976 & 0.958 \\ 

4 & 1/2 & (424, 2, 331, 13, 4, 0, 10, 1, 1, 0, 10, 0, 0, 0, 3, 1) & 0.220 & 0.075 & 0.062 & 31.6\% & 5.6\% & 0.039 & 0.035 & 0.974 & 0.956 \\ [1.5mm]

5 & -1/3 & (15, 734, 0, 13, 3, 18, 0, 0, 2, 15, 0, 0, 0, 0, 0, 0) & 0.472 & 0.025 & 0.013 & 38.0\% & 16.9\% & 0.021 & 0.019 & 0.977 & 0.965 \\ 

5 & 0 & (20, 719, 0, 23, 2, 20, 0, 0, 0, 16, 0, 0, 0, 0, 0, 0) & 0.477 & 0.027 & 0.011 & 34.8\% & 20.1\% & 0.022 & 0.021 & 0.976 & 0.966 \\ 

5 & 1/2 & (19, 713, 0, 20, 0, 17, 0, 2, 0, 18, 0, 4, 0, 6, 0, 1) & 0.471 & 0.030 & 0.014 & 31.9\% & 16.9\% & 0.025 & 0.023 & 0.977 & 0.967 \\ [1.5mm]

6 & -1/3 & (0, 148, 4, 234, 3, 242, 14, 140, 0, 10, 2, 0, 0, 2, 1, 0) & 0.471 & 0.164 & 0.014 & 42.6\% & 39.0\% & 0.052 & 0.052 & 0.982 & 0.980 \\ 

6 & 0 & (5, 196, 6, 194, 3, 188, 1, 194, 0, 3, 0, 2, 1, 5, 0, 2) & 0.485 & 0.136 & 0.007 & 33.8\% & 31.7\% & 0.052 & 0.051 & 0.977 & 0.974 \\ 

6 & 1/2 & (16, 266, 1, 129, 0, 126, 0, 247, 0, 0, 0, 2, 0, 4, 0, 9) & 0.481 & 0.092 & 0.009 & 20.7\% & 18.4\% & 0.052 & 0.051 & 0.968 & 0.966 \\ 
\hline
\end{tabular}
\label{tab:simu}
\end{sidewaystable}

\section{Empirical examples}
\label{sec:application}

\subsection{A study on smoking habits}

In 2004, the University of Kansas Medical Center conducted a randomized controlled $2^2$ factorial design to study the smoking habits of African American light smokers, i.e., those ``who smoke 10 or fewer cigarettes per day for at least six months prior to the study'' \citep{Ahluwalia:2006}. The study focused on two treatment factors -- nicotine gum consumption (2gm/day vs. placebo), and counseling  (health education vs. motivational interviewing). Among $N = 755$ participants, $n_1 = 189$ were randomly assigned to $\bm z_1$ (placebo and motivational interviewing), $n_2 = 188$ to $\bm z_2$ (placebo and health education), $n_3 = 189$ to $\bm z_3$ (nicotine gum and motivational interviewing), and $n_4 = 189$ to $\bm z_4$ (nicotine gum and health education). The primary outcome of interest was abstinence from smoking 26 weeks after enrollment, determined by whether salivary cotinine level was less than 20 ng/ml. \cite{Ahluwalia:2006} reported that  
$
(n_1^\mathrm{obs}, n_2^\mathrm{obs}, n_3^\mathrm{obs}, n_4^\mathrm{obs})
= (13, 29, 19, 34).
$

We re-analyze this data set in order to illustrate our proposed methodology. To save space we only focus on $\bar \tau_2,$ the main effect of counseling. The observed data suggests that its point estimate $\hat{\bar \tau}_2 = -0.082,$ the 95\% confidence intervals based on the classic and improved Neymanian variance estimators are (-0.129, -0.035) and (-0.127, -0.037), respectively. While the results largely corroborate \cite{Ahluwalia:2006}'s analysis and conclusion, the improved variance estimator does provide a narrower confidence interval -- the variance estimate by the improved Neymanian variance estimator is 92.1\% of that by the classic Neymanian variance estimator.

\subsection{A study on saphenous-vein coronary-artery bypass grafts}

The Post Coronary Artery Bypass Graft trial is a randomized controlled $2^2$ factorial design conducted between March 1989 and August 1991, on patients who were ``21 to 74 years of age, had low-density lipoprotein (LDL) cholesterol levels of no more than 200 mg/deciliter, and
had had at least two saphenous-vein coronary bypass grafts placed 1 to 11 years before the start of the study'' \citep{Post:1997}. The study concerned two treatment factors -- LDL cholesterol level lowering (aggressive, goal is 60--85 mg/deciliter vs. moderate), and low-dose anticoagulation (1mg warfarin vs. placebo). Among $N = 1351$ participants, $n_1 = 337$ were randomly assigned to $\bm z_1$ (moderate LDL lowering and placebo), $n_2 = 337$ to $\bm z_2$ (moderate LDL lowering and warfarin), $n_3 = 339$ to $\bm z_3$ (aggressive LDL lowering and placebo), and $n_4 = 337$ to $\bm z_4$ (aggressive LDL lowering and warfarin). For the purpose of illustration, we define the outcome of interest as the composite end point (defined as death from cardiovascular or unknown causes, nonfatal myocardial infarction, stroke, percutaneous transluminal coronary angioplasty, or coronary-artery bypass grafting) four years after enrollment. \cite{Post:1997} (in Table 5 and Figure 2, pp. 160) reported that  
\begin{equation*}
n_1^\mathrm{obs} + n_2^\mathrm{obs} = 103,
\quad
n_3^\mathrm{obs} + n_4^\mathrm{obs} = 85,
\quad
n_2^\mathrm{obs} + n_4^\mathrm{obs} = 89,
\quad
n_4^\mathrm{obs} = 68,
\end{equation*}
which implies that
$
(
n_1^\mathrm{obs}, n_2^\mathrm{obs}, 
n_3^\mathrm{obs}, n_4^\mathrm{obs}
)
= (82, 21, 17, 68).
$

We re-analyze the interactive effect $\bar \tau_3.$ The observed data suggests that $\hat{\bar \tau}_3 = 0.166,$ and the 95\% confidence intervals based on the classic and improved Neymanian variance estimators are (0.130, 0.202) and (0.133, 0.200), respectively. Again, the improved Neymanian variance estimator provides a narrower confidence interval, because its variance estimate is only 87.7\% of that by the classic Neymanian variance estimator. Moreover, the results suggest a statistically significant interactive effect between LDL cholesterol lowering and low-dose anticoagulation treatments, which appeared to be absent in \cite{Post:1997}'s original paper.

\section{Concluding remarks}
\label{sec:conclusion}

Motivated by several empirical examples in medical research, in this paper we studied \cite{Dasgupta:2015}'s randomization-based causal inference framework, under which factorial effects are defined as linear contrasts of potential outcomes under different treatment combinations, and the corresponding difference-in-means estimator's only source of randomness is the treatment assignment itself. However, as pointed out by \cite{Aronow:2014}, a long standing challenge faced by such finite-population frameworks is estimating the true sampling variance of the randomization-based estimator. In this paper, we solve this problem and therefore sharpen randomization-based causal inference for $2^2$ factorial designs with binary outcomes, which is not only of theoretical interest, but also arguably the most common and important setting for medical research among all factorial designs. To be more specific, we propose a new variance estimator improving the classic Neymanian variance estimator by \cite{Dasgupta:2015}. The key idea behind our proposed methodology is obtaining the sharp lower bound of the variance of unit-level factorial effects, and using a plug-in estimator for the lower bound. Through several numerical, simulated and empirical examples, we demonstrated the advantages of our new variance estimator.

There are multiple future directions based on our current work. First, although more of theoretical interests, it is possible to extend our methodology to general $2^K$ factorial designs, or even more complex designs such as $3^k$ or fractional factorial designs. Second, we can generalize our existing results for binary outcomes to other scenarios (continuous, time to event, et al.). Third, although this paper focuses on the ``Neymanian'' type analyses, the Bayesian counterpart of causal inference for $2^2$ factorial designs might be desirable. However, it is worth mentioning that, instead of adopting model-based approaches \citep{Simon:1997}, we seek to extend \cite{Rubin:1978}'s and \cite{Ding:2016}'s finite-population Bayesian causal inference framework to factorial designs, which requires a full Bayesian model on the joint distribution of the potential outcomes under all treatment combinations. However, this direction faces several challenges. For example, characterizing the dependence structure in multivariate binary distributions can be extremely complex, as pointed out by \cite{Cox:1972} and \cite{Dai:2013}. Fourth, it would be interesting to explore the potential use of our proposed variance estimator for constructions of non-parametric tests in factorial designs \citep{Solari:2009, Pesarin:2010}. Fifth, it is possible to further improve our variance estimator, by incorporating pre-treatment covariate information. All of the above are our ongoing or future research projects.

\section*{Acknowledgement}
The author thanks Professor Tirthankar Dasgupta at Rutgers University and Professor Peng Ding at UC Berkeley for early conversations which largely motivated this work, and several colleagues at the Analysis and Experimentation team at Microsoft, especially Alex Deng, for continuous encouragement. Thoughtful comments from the Editor-in-Chief Professor Brian Everitt and two anonymous reviewers have substantially improved the quality and presentation of the paper.

\bibliographystyle{apalike}
\bibliography{factorial_binary_final}

%\newpage
\appendix

\section{Proofs of lemmas, theorems and corollaries}
\label{sec:proofs}

\begin{proof}[Proof of Lemma \ref{lemma:variance-taul-formula}]
The proof in large follows \cite[Dasgupta et al.][]{Dasgupta:2015}. By \eqref{eq:factorial-effects}
\begin{align*}
\sum_{i=1}^N \tau_{il}^2 
& = 2^{-2} \sum_{i=1}^N (\bm h_l^\prime \bm Y_i )^2 \\
& = 2^{-2} \sum_{i=1}^N 
\left(
\sum_{j=1}^4 h_{lj}Y_i(\bm z_j)
\right)^2 \\
& = 2^{-2} \sum_{i=1}^N 
\left(
\sum_{j=1}^4 h_{lj}^2 Y^2_i(\bm z_j)
+
\sum_{j \ne j^\prime} h_{lj}h_{lj^\prime} Y_i(\bm z_j)Y_i(\bm z_{j^\prime})
\right) \\
& = 2^{-2}
\left(
\sum_{j=1}^4 h_{lj}^2 \sum_{i=1}^N  Y^2_i(\bm z_j)
+
\sum_{j \ne j^\prime} h_{lj}h_{lj^\prime} \sum_{i=1}^N Y_i(\bm z_j)Y_i(\bm z_{j^\prime})
\right) \\
& = 2^{-2} 
\left(
\sum_{j=1}^4 N_j
+
\sum_{j \ne j^\prime} h_{lj}h_{lj^\prime} N_{jj^\prime}
\right).
\end{align*}
By combining the above and the fact that
\begin{equation*}
S^2(\bar \tau_l) 
= 
(N-1)^{-1}
\left(
\sum_{i=1}^N \tau_{il}^2 - N \bar \tau_l^2 \right),
\end{equation*}
the proof is complete.
\end{proof}

\begin{proof}[Proof of Lemma \ref{lemma:ie-inequality}]
We only prove the case where $l=1$ and 
$
\bm h_l = (-1, -1, 1, 1)^\prime,
$
because other cases ($l=2, 3$) are analogous. We break down \eqref{eq:ie-inequality} to two parts:
\begin{equation}
\label{eq:ie-inequality-1}
\sum_{j=1}^4 N_j
+
\sum_{j \ne j^\prime} h_{1j}h_{1j^\prime} N_{jj^\prime}
\ge
\sum_{j=1}^4 h_{1j} N_l
\end{equation}
and
\begin{equation}
\label{eq:ie-inequality-2}
\sum_{j=1}^4 N_j
+
\sum_{j \ne j^\prime} h_{1j}h_{1j^\prime} N_{jj^\prime}
\ge
- \sum_{j=1}^4 h_{1j} N_l,
\end{equation}
and prove them one by one. It is worth emphasizing that, for the equality in \eqref{eq:ie-inequality} to hold, we only need the equality in either \eqref{eq:ie-inequality-1} or \eqref{eq:ie-inequality-2} to hold.  

To prove \eqref{eq:ie-inequality-1}, note that 
\begin{equation*}
h_{11} = h_{12} = -1,
\quad
h_{13} = h_{14} = 1,
\end{equation*}
and therefore \eqref{eq:ie-inequality-1} is equivalent to
\begin{equation*}
N_{13} + N_{14} + N_{23} + N_{24}
\le
N_1 + N_2
+
N_{34} + N_{12}.
\end{equation*}
We use the inclusion-exclusion principal to prove the above. First, it is obvious that
\begin{equation}
\label{eq:ie-inequality-3-1}
N_{1234} \le N_{12},
\end{equation}
and the equality holds if and only if the set
\begin{equation*}
\{i: Y_i(\bm z_1) = Y_i(\bm z_2) = 1; \; Y_i(\bm z_3) = 0 \; \mathrm{or} \; Y_i(\bm z_4) = 0 \} = \varnothing,  
\end{equation*}
or equivalently
\begin{equation}
\label{eq:ie-inequality-3-2}
\{i: Y_i(\bm z_1) + Y_i(\bm z_2) = 2; \; Y_i(\bm z_3) + Y_i(\bm z_4) < 2 \} = \varnothing .    
\end{equation}
Second, note that
\begin{align}
\label{eq:ie-inequality-3-3}
N_{13} + N_{14} - N_{134}
& = \# \{i: Y_i(\bm z_1) = Y_i(\bm z_3) = 1\} + \# \{i: Y_i(\bm z_1) = Y_i(\bm z_4) = 1\} \nonumber \\
& - \# \{i: Y_i(\bm z_1) = Y_i(\bm z_3) = Y_i(\bm z_4) = 1\} \nonumber \\
& = \# \{i: Y_i(\bm z_1) = 1; \; Y_i(\bm z_3) =1 \; \mathrm{or} \; Y_i(\bm z_4) = 1 \} \nonumber \\
& \le \# \{i: Y_i(\bm z_1) = 1 \} \nonumber \\
& = N_1.
\end{align}
The equality in \eqref{eq:ie-inequality-3-3} holds if and only if
\begin{equation}
\label{eq:ie-inequality-3-4}
\{i: Y_i(\bm z_1) = 1; \; Y_i(\bm z_3) + Y_i(\bm z_4) = 0 \} = \varnothing.    
\end{equation}
Third, by the same argument we have
\begin{equation}
\label{eq:ie-inequality-3-5}
N_{23} + N_{24} - N_{234} \le N_2,
\end{equation}
and the equality in \eqref{eq:ie-inequality-3-5} holds if and only if
\begin{equation}
\label{eq:ie-inequality-3-6}
\{i: Y_i(\bm z_2) = 1; \; Y_i(\bm z_3) + Y_i(\bm z_4) = 0 \} = \varnothing.    
\end{equation}
Fourth, by applying the similar logic, we have
\begin{align}
\label{eq:ie-inequality-3-7}
N_{134} + N_{234} - N_{1234} \le N_{34},
\end{align}
and the equality in \eqref{eq:ie-inequality-3-7} holds if and only if
\begin{equation}
\label{eq:ie-inequality-3-8}
\{i: Y_i(\bm z_1) + Y_i(\bm z_2) = 0; \; Y_i(\bm z_3) + Y_i(\bm z_4) = 2 \} = \varnothing.    
\end{equation}
By combining \eqref{eq:ie-inequality-3-1}, \eqref{eq:ie-inequality-3-3}, \eqref{eq:ie-inequality-3-5} and \eqref{eq:ie-inequality-3-7}, we have proved that \eqref{eq:ie-inequality-1} holds. Moreover, the equality in \eqref{eq:ie-inequality-1} holds if and only if \eqref{eq:ie-inequality-3-1}, \eqref{eq:ie-inequality-3-3}, \eqref{eq:ie-inequality-3-5} and \eqref{eq:ie-inequality-3-7} hold simultaneously, i.e., the four conditions in \eqref{eq:ie-inequality-3-2}, \eqref{eq:ie-inequality-3-4}, \eqref{eq:ie-inequality-3-6} and \eqref{eq:ie-inequality-3-8} are met simultaneously. We leave it to the readers to verify that this is indeed 
equivalent to \eqref{eq:ie-inequality-hold-1}, i.e. for all $j = 1, \ldots, N,$
\begin{equation*}
Y_i(\bm z_3) + Y_i(\bm z_4) - 1 \le Y_i(\bm z_1) + Y_i(\bm z_2) \le Y_i(\bm z_3) + Y_i(\bm z_4).
\end{equation*}

The proof of \eqref{eq:ie-inequality-2} is symmetrical, because it is equivalent to
\begin{equation*}
N_{13} + N_{14} + N_{23} + N_{24}
\le
N_3 + N_4
+
N_{34} + N_{12}.
\end{equation*}
In particular, the equality in \eqref{eq:ie-inequality-2} holds if and only if \eqref{eq:ie-inequality-hold-2} holds (again we leave the verification to the readers). The proof is complete. 
\end{proof}

\begin{proof}[Proof of Theorem \ref{thm:variance-taul-lower-bound}] 
The proof directly follows from \eqref{eq:factorial-effects}, and Lemma \ref{lemma:variance-taul-formula} and \ref{lemma:ie-inequality}.
\end{proof}

\section{Additional simulation studies}
\label{sec:simu-add}

We conduct an additional series of simulation studies to take into account data generation mechanisms different from those described in Section \ref{sec:simu}. In order to generate a ``diverse'' set of joint distributions of the potential outcomes
$
\bm D = (D_{0000}, D_{0001}, \ldots, D_{1111})
$
while keeping the simulation cases closer to our empirical examples, we let
\begin{equation*}
\lambda_1 = 30, \lambda_j \stackrel{\mathrm{iid.}}{\sim} \mathrm{Unif} (0, 1)
\quad
(j = 2, \ldots, 16);
\quad
\bm p = (\lambda_1, \ldots, \lambda_{16})^\prime \big/ \sum_{j=1}^{16} \lambda_j,     
\end{equation*}
and
\begin{equation*}
\bm D \mid \bm p \sim \mathrm{Multinomial} (800, \bm p).
\end{equation*} 
The main rationale behind the above data generation mechanism is that, in many medical studies the (potential) primary endpoint (e.g., mortality) is zero for most patients under any treatment combination. Indeed, our setting guarantees that on average 66.7\% of the experimental units have $Y_i(\bm z_j) = 0$ for all $j = 1, \ldots, 4.$

We use the aforementioned data generation mechanism to produce 50 simulation cases. For each simulation case, we follow the procedure described in Section \ref{sec:simu}, and (to make the article concise) report only the coverage results in Figure \ref{fg:simu-add}. The results largely agree with the conclusions made in Section \ref{sec:simu}, i.e., the improved Neymanian variance estimator in \eqref{eq:factorial-effects-variance-estimator-improved} always, and sometimes greatly, mitigates the over-estimation issue of the classic Neymaninan variance estimator.

\begin{figure}[htbp]
\centering
\includegraphics[width = .75\linewidth]{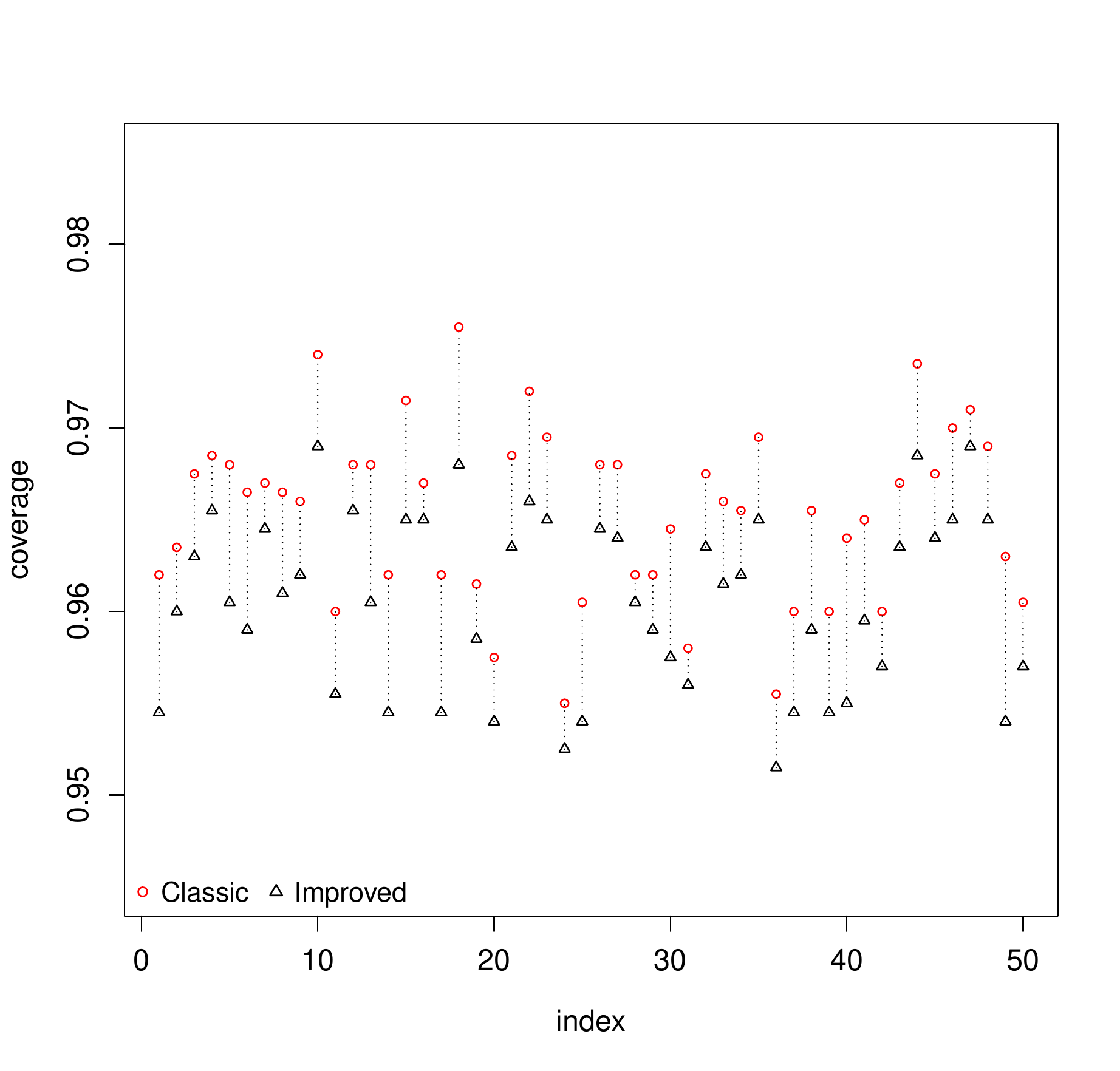}
\caption{Additional simulation results: The horizontal axis contains the indices of the simulation cases, and the vertical shows the coverage rates for the 95\% classic (red rounded dot) and improved (black triangular dot) Neymanian intervals.}
\label{fg:simu-add}
\end{figure}

\end{document}